\documentclass[11pt, a4paper]{article}

\usepackage{microtype}

\usepackage{thmtools, thm-restate}

\usepackage[utf8]{inputenc} 
\usepackage[T1]{fontenc} 
\usepackage{geometry} 
\usepackage[%
    colorlinks      = true,
  linktocpage     = true,
  linkcolor       = blue,
  citecolor       = blue,
  urlcolor        = blue,
  breaklinks =true,
  hypertexnames=false,
]{hyperref}

\usepackage{setspace}
\usepackage[affil-it]{authblk}

\usepackage{amsmath, amsthm, amsfonts, amssymb, amsbsy, bigstrut, graphicx, enumerate,  upref, longtable, comment, booktabs, array, caption, subcaption}

\usepackage[numbers, sort&compress]{natbib}

\newcommand{\ep}{\epsilon}

\newtheorem{thm}{Theorem}[section]
\newtheorem{lmm}[thm]{Lemma}
\newtheorem{cor}[thm]{Corollary}
\newtheorem{prop}[thm]{Proposition}

\theoremstyle{definition}
\newtheorem{remark}[thm]{Remark}

\newcommand{\tr}{\operatorname{Tr}}

\newcommand{\uone}{\mathrm{U}(1)}
\newcommand{\inn}[1]{\langle #1 \rangle}

\numberwithin{equation}{section}

\renewcommand{\Re}{\operatorname{Re}}

\usepackage[utf8]{inputenc}
\usepackage{amsmath}
\usepackage{amsfonts}
\usepackage{bbm}
\usepackage{mathtools}
\usepackage{tikz}
\usetikzlibrary{decorations.pathreplacing}
\usepackage{amsthm}
\usepackage[english]{babel}
\usepackage{fancyhdr}
\usepackage{amssymb}
\usepackage{enumitem}
\usepackage{qtree}
\usepackage{mathrsfs}

\newcommand{\Z}{\mathbb{Z}}

\newcommand{\R}{\mathbb{R}}

\newcommand{\C}{\mathbb{C}}

\newcommand{\E}{\mathbb{E}}

\renewcommand{\tilde}{\widetilde}

\renewcommand{\i}{\mathrm{i}}
\renewcommand{\Im}{\operatorname{Im}}

\newcommand{\U}{\mathrm{U}}

\begin{document}
\title{A short proof of confinement in three-dimensional lattice gauge theories with a central $\mathrm{U}(1)$}
\author{Sourav Chatterjee\thanks{Department of Statistics, Stanford University, USA. Email: \href{mailto:souravc@stanford.edu}{\tt souravc@stanford.edu}. 
}}
\affil{Stanford University}


\maketitle

\begin{center}
{\it Dedicated to Len Gross on the occasion of his 95$^{\,th}$ birthday.}
\end{center}

\begin{abstract}
Pure lattice gauge theories in three dimensions are widely expected to confine.  A
rigorous proof of confinement for three-dimensional $\mathrm{U}(1)$ lattice gauge theory with Villain
action was given by G{\"o}pfert and Mack.  Beyond the abelian case, rigorous confinement
results are comparatively scarce; one general mechanism applies when the gauge group
has a central copy of $\mathrm{U}(1)$.  Indeed, combining a comparison inequality of Fr{\"o}hlich
with earlier work of Glimm and Jaffe yields confinement with a logarithmically growing
quark--antiquark potential for this class of theories.

The purpose of this note is to give a short, self-contained proof of this classical result for three-dimensional Wilson lattice gauge theory: when $G\subseteq \mathrm{U}(n)$
contains the full circle of scalar matrices $\{zI:\ |z|=1\}$, rectangular Wilson loops
obey an explicit upper bound of the form
$\lvert\langle W_\ell\rangle\rvert \le n\exp\{-c(1+n\beta)^{-1}T\log(R+1)\}$.
\newline
\newline
\noindent {\scriptsize {\it Key words and phrases.} Lattice gauge theory, quark confinement, area law.}
\newline
\noindent {\scriptsize {\it 2020 Mathematics Subject Classification.} 70S15, 81T13, 81T25, 82B20.}
\end{abstract}


\section{Lattice gauge theories and Wilson loops}
\label{sec:model}
Let $G$ be a compact Lie subgroup of $\U(n)$ for some $n$.  Fix a finite set
$\Lambda\subseteq \mathbb{Z}^d$.  A (free-boundary) lattice gauge configuration on
$\Lambda$ assigns to each oriented nearest-neighbor edge $(x,y)$ with $x,y\in\Lambda$
an element $U(x,y)\in G$, subject to the constraint $U(y,x)=U(x,y)^{-1}$.
Let $G(\Lambda)$ denote the space of such configurations.

A unit square in $\Lambda$ bounded by four edges is called a plaquette; write
$P(\Lambda)$ for the set of plaquettes.  For $p\in P(\Lambda)$ with vertices
$x_1,x_2,x_3,x_4$ listed in counterclockwise order (with $x_1$ the lexicographically
smallest vertex and $x_2$ the second-smallest), define the plaquette holonomy
\[
U_p := U(x_1,x_2)\,U(x_2,x_3)\,U(x_3,x_4)\,U(x_4,x_1).
\]
The Wilson action of $U$ is defined as 
\begin{align*}
S_\Lambda(U) := \sum_{p\in P(\Lambda)} \Re(\tr(I-U_p)),
\end{align*}
where $I$ is the identity matrix of order $n$. 
Let $\sigma_{\Lambda}$ be the product Haar measure on $G(\Lambda)$, normalized to be a probability measure. Given $\beta >0$, let $\mu_{\Lambda, \beta}$ be the probability measure on $G(\Lambda)$ defined as
\[
d\mu_{\Lambda, \beta}(U) := \frac{1}{Z}e^{-\beta S_\Lambda(U)} d\sigma_{\Lambda}(U)\, ,
\]
where  $Z$ is the normalizing constant. This probability measure is called the Wilsonian lattice gauge theory on $\Lambda$ for the gauge group $G$, with inverse coupling strength $\beta$, and free boundary condition.

A Wilson loop $\ell$ is a sequence of points in $x_0,x_1,\ldots,x_k$ in $\Z^d$, not necessarily distinct, such that for each $0\le j\le k-1$, $x_{j+1}$ is a neighbor of $x_j$, and $x_0$ is a neighbor of $x_k$. In other words, $\ell$ is a closed loop in $\Z^d$. Suppose that a Wilson loop $\ell = (x_0,x_1,\ldots,x_k)$ is contained in $\Lambda$. The Wilson loop observable for $\ell$ is defined as
\[
W_\ell(U):= \tr(U(x_0,x_1)U(x_1,x_2)\cdots U(x_{k-1},x_k) U(x_k,x_0)). 
\]

\section{Quark confinement}
A central qualitative prediction of lattice gauge theory is confinement~\cite{wilson74}:
the energy required to separate a static quark--antiquark pair should diverge with the
separation.  In the lattice formulation this is often expressed through bounds on
rectangular Wilson loops. We say that the theory \emph{confines} if there exists a function $V:(0,\infty)\to\mathbb{R}$
with $V(R)\to\infty$ as $R\to\infty$ such that for every rectangular loop $\ell$ in
$\mathbb{Z}^d$ with side lengths $R\le T$,
\[
\lvert\langle W_\ell\rangle\rvert \le Ce^{-V(R)T},
\]
where $C$ is a constant independent of $R$ and $T$ (and the volume is taken large
enough so that $\ell$ is contained in~$\Lambda$).  A theory that does not satisfy such a
bound is called \emph{deconfining}. A particularly important case is \emph{Wilson's area law}, where $V(R)$ grows
linearly in $R$; equivalently, $\lvert\langle W_\ell\rangle\rvert$ decays at least like
$\exp\{-c\, \mathrm{area}(\ell)\}$ for some $c>0$.  By contrast, deconfining theories
typically exhibit a \emph{perimeter law}, with $\langle W_\ell\rangle$ decaying on the
scale of~$C_1 e^{-C_2(R+T)}$.

\section{Confinement in 3D lattice gauge theories}
This note concerns three-dimensional Wilson lattice gauge theory with a compact gauge
group $G\subseteq \U(n)$ whose center contains a full copy of $\U(1)$, namely
\[
\{zI:\ z\in\mathbb{C},\ |z|=1\}\subseteq G.
\]
This class includes $\U(n)$ itself and many related compact groups.  Our goal here is
expository: we give an elementary, self-contained argument implying confinement in the
sense that the effective potential $V(R)$ diverges with the separation $R$.
The resulting bound is logarithmic in $R$ (and therefore weaker than the expected
linear behavior in $d=3$), but the proof is short and stays entirely within the lattice
framework.  Our main result is the following.



\begin{thm}\label{confine3d}
Let $G$ be a compact Lie subgroup of $\U(n)$ for some $n$. Assume that $zI \in G$ for all $z\in \C$ with $|z|=1$, where $I$ denotes the $n\times n$ identity matrix. Let $\Lambda$ be a finite subset of $\Z^3$ and $\beta$ be a positive real number. Consider the lattice gauge theory on $\Lambda$ with this $G$ and $\beta$, under free boundary condition. For any rectangular loop $\ell$ contained in $\Lambda$ with side lengths $R\le T$, we have 
\[
|\inn{W_\ell}| \le n e^{-C (1+n\beta)^{-1} T \log (R+1)},
\]
where $C$ is a  positive universal constant.
\end{thm}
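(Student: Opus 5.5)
Since the scalar circle $\{zI\}$ lies in the center of $G$, we can extract a $\U(1)$ degree of freedom and compare the model with a $\U(1)$-type Gaussian computation. The plan is a direct lattice change of variables combined with a convexity (Jensen) bound. This is essentially the Fröhlich comparison and the Glimm--Jaffe log bound, done by hand.

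\textbf{Step 1: a central twist of the links.} Fix a real function $\theta$ on edges, with $\theta(y,x)=-\theta(x,y)$. Replace $U(e)$ by $e^{\i t\theta(e)}U(e)$. Because the scalars are central and $\sigma_\Lambda$ is invariant under multiplication by $e^{\i t\theta(e)}I\in G$, the plaquette variables change as $U_p\mapsto e^{\i t(d\theta)_p}U_p$ and the loop observable as $W_\ell\mapsto e^{\i t\sum_{e\in\ell}\theta(e)}W_\ell$. Averaging over $t$ (or over a Gaussian field $\theta$) then gives an identity
\[
\inn{W_\ell}=\frac{1}{Z}\int W_\ell(U)\,\E\Big[e^{\i\langle \theta,\ell\rangle}\,e^{-\beta\sum_p \Re\tr(I-e^{\i (d\theta)_p}U_p)}\Big]\,d\sigma_\Lambda(U).
\]

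\textbf{Step 2: making the comparison quadratic.} Write $\Re\tr(I-e^{\i\phi}U_p)$ as a function of $\phi$. Its second derivative in $\phi$ is bounded by $n$, since it equals $-\Re\tr(e^{\i\phi}U_p)$, whose modulus is at most $n$. A second-order Taylor bound, or a complex shift $\theta\mapsto\theta+\i s\eta$ in a Gaussian integral (the Glimm--Jaffe/Fröhlich trick), then yields
\[
|\inn{W_\ell}|\le n\exp\Big(-\sup_\eta\big[\langle\eta,\ell\rangle-\tfrac{(1+n\beta)}{2}\|d\eta\|^2\big]\Big).
\]
Here $\eta$ ranges over real 1-forms, and the $1$ in $1+n\beta$ comes from the Gaussian reference weight. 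Optimizing gives the exponent $\frac{\langle \ell,\Delta^{-1}\ell\rangle_{\perp}}{2(1+n\beta)}$, i.e.\ an electrostatic energy of the loop current. A concrete choice of $\eta$ is cleaner: take $\eta=\delta\chi$, or $\eta$ equal to the dual of the gradient of the 3D potential generated by the loop current. In practice I would pick $\eta$ as a truncated Coulomb field of the two long sides of the rectangle.

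\textbf{Step 3: the energy of a long thin rectangle in three dimensions.} I would show that, for a rectangle with sides $R\le T$, the quantity
\[
\sup_\eta\big[\langle\eta,\ell\rangle-\tfrac12\|d\eta\|^2\big]
\]
is at least $cT\log(R+1)$. The source is a pair of opposite line currents at distance $R$, each of length $T$. In 3D two antiparallel wires have interaction energy per unit length of order $\log R$, because the 2D Green's function is logarithmic. I would take $\eta$ built from the 2D discrete Green's function $g(x)-g(x-Re_1)$ in the transverse plane, extended constantly along the long direction and cut off near the short ends. This gives $\langle\eta,\ell\rangle\approx 2T\log R$ and $\|d\eta\|^2\approx T\log R$, plus end corrections of order $R\log R$.

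\textbf{Main obstacle.} The end corrections are the delicate part. They are of order $R\log R\le T\log R$, so they must be controlled with a constant, and for $R\le T$ small one must ensure positivity. I would scale $\eta$ by a small universal factor so that the gain dominates. I also expect the rigorous justification of the complex shift in Step 2, with the bound $n$ on the trace derivative, to require care: the integrand must be handled via analyticity in $\theta$, and the free boundary on a finite $\Lambda$ must be dealt with.
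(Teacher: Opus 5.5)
Your route, a complex translation of the central $\uone$ phases in the McBryan--Spencer/Glimm--Jaffe style, differs from the paper's real-rotation anti-concentration argument and can be made to work, but Step~2 is not justified as stated, and the `$1$' in $1+n\beta$ is misattributed. Write $\Re\tr(e^{\i\phi}U_p)=\frac12\bigl(e^{\i\phi}\tr U_p+e^{-\i\phi}\overline{\tr U_p}\bigr)$ and shift $\theta\mapsto\theta+\i\eta$. This is cleanest with uniform $\uone$ edge variables as in the paper's measure $\nu$, where $2\pi$-periodicity justifies the contour shift. Then $\Re(e^{\i(d\theta)_p}\tr U_p)$ gets multiplied by $\cosh((d\eta)_p)$ in modulus, and what you actually obtain is
\[
|\inn{W_\ell}|\le n\exp\Bigl(-\langle\eta,\ell\rangle+n\beta\sum_{p\in P(\Lambda)}\bigl(\cosh((d\eta)_p)-1\bigr)\Bigr).
\]
Since $\cosh s-1\ge s^2/2$ and grows exponentially, this penalty is not bounded by $\frac{n\beta}{2}\|d\eta\|^2$. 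A bound on the second derivative along real $\phi$ says nothing about imaginary translations. A Gaussian reference weight does not help either: after the shift it only adds the further penalty $\frac12\|d\eta\|^2$. So the supremum over all real $\eta$, and the unconstrained electrostatic optimizer, are not available. The missing ingredient is an $\ell^\infty$ constraint. For $\max_p|(d\eta)_p|\le 1$ one has $\cosh s-1\le s^2$. The factor $(1+n\beta)^{-1}$ comes from having to take $\eta$ of size $O((1+n\beta)^{-1})$, so that the penalty is at most half the gain.

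With this repair your Step~3 goes through, and the `end corrections' you flag as the main obstacle do not arise. Let $h$ on $\Z^2$ (coordinates $(x_1,x_3)$) be the potential of a unit current between $(0,0)$ and $(R,0)$. Then $|\nabla h|\le 1$, and $\|\nabla h\|^2$ is the effective resistance, which is at least $c\log(R+1)$. Set $\eta_0(x,x+e_2)=h(x_1,x_3)$ for $0\le x_2\le T-1$, with the sign chosen so that $\langle\eta_0,\ell\rangle>0$, and $\eta_0=0$ on all other edges. A plaquette contains either no $e_2$-edges or exactly two, with base points at the same height. Hence the sharp cutoff creates no curl, and $\langle\eta_0,\ell\rangle=\|d\eta_0\|^2=T\|\nabla h\|^2$ exactly. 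Restricting to $E(\Lambda)$ leaves $\langle\eta_0,\ell\rangle$ unchanged and only removes plaquettes from the penalty, so the free boundary is harmless. Taking $\eta=\eta_0/(2(1+n\beta))$ makes the exponent at least $T\|\nabla h\|^2/(4(1+n\beta))$, which gives the theorem.

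The repaired argument is a genuinely different proof. Your $\eta_0$ is a 2D dipole potential on each slice $E_j$ of $e_2$-edges, exactly the slices the paper decouples by conditioning. One contour shift replaces three ingredients of the paper: the anti-concentration lemma, the conditional independence across slices, and the change of variables $\tau$ that makes the increments along the segment conditionally independent. Your version is shorter with explicit constants, but it relies on analyticity and lattice potential theory. The paper's version stays with real rotations and needs only the crude bound $|u_k|\le CLk$.
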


\paragraph{History.}
Using an inequality of~\citet*{messageretal78}, \citet{frohlich79} proved that $\U(n)$ lattice gauge theory confines whenever the
corresponding $\U(1)$ theory does.  Earlier,~\citet{glimmjaffe77}
established confinement for three-dimensional $\U(1)$ gauge theory with a
logarithmically growing potential, using ideas originating in the McBryan--Spencer
approach to the Mermin--Wagner theorem~\cite{mcbryanspencer77}.  Combining these results
yields Theorem~\ref{confine3d}; indeed it appears as Corollary~2 in~\cite{frohlich79}. The purpose of the present note is to provide a single, streamlined proof in this setting, with all estimates carried out directly on the lattice, and without appealing
to several separate arguments across multiple papers.

\paragraph{Idea of the proof.}
The proof of Theorem~\ref{confine3d} adapts the Mermin--Wagner ``complex rotation''
mechanism for the two-dimensional XY model~\cite{mermin67, merminwagner66}.  The central $\U(1)$
allows us to introduce auxiliary $\U(1)$ edge variables and to decompose the Wilson loop
into a $\U(1)$ phase factor and a bounded $G$-matrix factor; we then condition on the
latter.  A convenient geometric choice of conditioning yields an essentially
two-dimensional structure and a factorization over $T$ slices.  This reduces the
problem to bounding a two-point function in a weighted XY-type model in two
dimensions, for which we obtain a power-law bound by a short argument.

\paragraph{On the stronger area law.}
For $\U(1)$ gauge theory in $d=3$, confinement is known to be substantially stronger:
G{\"o}pfert and Mack~\cite{gopfertmack82} proved positivity of the string tension for all
couplings in a Villain-type formulation, i.e., an area law and at least linear growth
of the quark--antiquark potential.  It is natural to ask whether an appropriate version
of Fr{\"o}hlich's comparison inequality could be combined with such inputs to yield an
area law for three-dimensional $\U(n)$ gauge theory with Wilson action.  We do not
pursue this here, since doing so carefully would require additional technology and a
careful comparison between formulations and limits.

\section{Proof}\label{sec:logbound}
The proof requires the following `anti-concentration' lemma about a certain class of probability measures on $\uone$. 
\begin{lmm}\label{spreadlmm}
Take any $w\in \C$ and let $\mu$ be the probability measure on $\uone$ that has density proportional to $ \exp\{\Re(z\overline{w})\}$ with respect to the Haar measure on $\uone$. Then 
\[
\iint |z_1 - z_2|^2 d\mu(z_1) d\mu(z_2) \ge C \min\{1,|w|^{-1}\},
\]
where $C$ is a positive universal constant.
\end{lmm}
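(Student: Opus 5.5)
By rotation invariance of the Haar measure on $\uone$, I may assume $w=r\ge 0$ is real. Then $\mu$ is the von Mises distribution, with density $e^{r\cos\theta}/I_0(r)$ with respect to $d\theta/2\pi$. For independent $z_1,z_2\sim\mu$ we have
\[
\iint |z_1-z_2|^2\,d\mu\,d\mu = 2 - 2\Big|\int z\,d\mu\Big|^2 = 2\big(1-m(r)^2\big),
\]
where $m(r)=\int\cos\theta\,d\mu = I_1(r)/I_0(r)$ and the imaginary part of the mean vanishes by symmetry. Since $1-m^2\ge 1-m$, the lemma reduces to showing $1-m(r)\ge C\min\{1,r^{-1}\}$. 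I will check this in two regimes.

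\emph{Small $r$, say $r\le 1$.} The density is bounded above and below by $e^{\pm 1}$ times the uniform density. Hence $m(r)\le e^{2}\int_{\cos\theta>0}\cos\theta\,d\theta/2\pi\cdot$(normalizing factor) stays bounded away from $1$. More simply, $\mu(\{\cos\theta\le 0\})\ge e^{-2}/2$. This gives $1-m(r)\ge \int(1-\cos\theta)\,d\mu \ge \mu(\cos\theta\le 0)\ge c$.

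\emph{Large $r\ge 1$.} I want $\E_\mu[1-\cos\theta]\ge c/r$. The measure concentrates in a window $|\theta|\lesssim r^{-1/2}$, and the claim is that a fixed fraction of the mass has $|\theta|\ge r^{-1/2}$. This is the step requiring some care, and I handle it by comparing masses directly. On $A=\{r^{-1/2}\le|\theta|\le 2r^{-1/2}\}$ the exponent satisfies $r\cos\theta\ge r-2$. On $B=\{|\theta|< r^{-1/2}\}$ it satisfies $r\cos\theta\le r$. Both sets have Lebesgue measure of order $r^{-1/2}$ (namely $2r^{-1/2}$ each), so
\[
\mu(A)\ge e^{-2}\mu(B).
\]
Since $\mu(A)+\mu(B)\le1$ and $\mu(A)\ge e^{-2}\mu(B)$, we also need $\mu(A\cup B)$ not too small. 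Outside $A\cup B$ we have $|\theta|\ge 2r^{-1/2}$, so $1-\cos\theta\ge c'\min\{\theta^2,1\}\ge c'/r$ there. On $A$ the same bound $1-\cos\theta\ge c'/r$ holds. Therefore
\[
\E_\mu[1-\cos\theta]\ge \frac{c'}{r}\big(1-\mu(B)\big)\ge \frac{c'}{r}\,\frac{e^{-2}}{1+e^{-2}},
\]
where the last inequality uses $1-\mu(B)\ge\mu(A)\ge e^{-2}\mu(B)$, which gives $\mu(B)\le 1/(1+e^{-2})$.

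Combining the two regimes yields $\iint|z_1-z_2|^2\,d\mu\,d\mu\ge 2(1-m)\ge C\min\{1,|w|^{-1}\}$. Throughout I use the elementary inequality $1-\cos\theta\ge 2\theta^2/\pi^2$ for $|\theta|\le\pi$. No earlier results from the paper are needed.
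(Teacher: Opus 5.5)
Your proof is correct, but it takes a different route from the paper.

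\textbf{Your route.} You use the explicit form of the density. You rotate to $w=r\ge 0$ and convert the double integral to $2(1-m(r)^2)$ via $|z_1-z_2|^2=2-2\Re(z_1\overline{z}_2)$; the paper uses this identity only afterwards, in Corollary~\ref{spreadcor}. You then bound $1-m=\int(1-\cos\theta)\,d\mu$ separately for $r\le 1$ and $r\ge 1$. For $r\ge 1$ you compare the mass of the window $B=\{|\theta|<r^{-1/2}\}$ around the mode with that of the adjacent window $A$ of equal length, on which $e^{r\cos\theta}$ is at least $e^{-2}$ times its maximum.

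\textbf{The paper's route.} The paper runs a Mermin--Wagner type shift argument directly on $\mu$. With $\ep=\min\{1,|w|^{-1/2}\}$ it shows $\sqrt{\rho(ze^{\i\ep})\rho(ze^{-\i\ep})}\ge e^{-1/2}\rho(z)$. By Cauchy--Schwarz this gives $\sqrt{\mu(Ae^{\i\ep})\mu(Ae^{-\i\ep})}\ge e^{-1/2}\mu(A)$ for every set $A$. Taking $A=\{z':|z-z'|<\ep/4\}$, which is disjoint from its rotations by $\pm\ep$, shows that $A$ carries at most a fraction $1/(1+e^{-1/2})$ of the mass. Hence $\int|z-z'|^2\,d\mu(z')\ge c\,\ep^2$ for every $z$.

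\textbf{What each buys.} The paper's argument needs no reduction to real $w$, no case split, and no use of symmetry or of where the mode is. It only uses that the symmetric second difference of $\log\rho$ at scale $\ep$ is bounded below. So it gives uniform anti-concentration on all small arcs and would work for any density with that property. Yours is more hands-on and tied to the unimodal, symmetric shape of $e^{r\cos\theta}$, but it is just as elementary and the constants are transparent.

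\textbf{Two small fixes.} First, the step $1-m^2\ge 1-m$ needs $m\ge 0$, which you should justify. Pairing $\theta$ with $\pi-\theta$ gives $\int_{-\pi}^{\pi}\cos\theta\, e^{r\cos\theta}\,d\theta=\int_{|\theta|\le\pi/2}\cos\theta\,(e^{r\cos\theta}-e^{-r\cos\theta})\,d\theta\ge 0$, equivalently $I_1(r)\ge 0$. Second, the remark that you ``need $\mu(A\cup B)$ not too small'' is superfluous. The chain $1-\mu(B)\ge\mu(A)\ge e^{-2}\mu(B)$ is all your final display actually uses.
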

\begin{proof}
By Taylor expansion, we have that for any $x\in \R$,
\begin{align}\label{eynorm}
|e^{\i x} - 1 - \i x| &\le \frac{x^2}{2}.
\end{align}
Thus, for any $z\in \uone$ and $x\in \R$, 
\begin{align*}
|z - z e^{ \i x}| &\ge |\i z x|  - |z(1-e^{\i x} + \i x)|\\
&\ge |x| - \frac{1}{2}x^2.
\end{align*}
Choosing 
\[
\ep := \min\{1,|w|^{-1/2}\}\le 1, 
\]
we get that for any $z\in \uone$,
\begin{align}\label{mainep}
|z - z e^{\pm \i \ep}| &\ge \ep  - \frac{1}{2}\ep^2 \ge \frac{\ep}{2}.
\end{align}
Define $\rho:\uone \to [0,\infty)$ as 
\[
\rho(z) := \exp\{\Re(z\overline{w})\},
\]
and let 
\[
Z:= \int_{\uone} \rho(z) d\sigma_0(z),
\]
where $\sigma_0$ is the normalized Haar measure on $\uone$. Then note that $\mu$ has density $f= Z^{-1}\rho$ with respect to $\sigma_0$. 
Define $\rho^+(z) := \rho(z e^{\i\ep})$ and $\rho^-(z) := \rho(z e^{-\i \ep})$. Let $f^+$ and $f^-$ be the probability densities proportional to $\rho^+$ and $\rho^-$, respectively. By the invariance of Haar measure, the respective normalizing constants are both equal to $Z$. Let $\mu^+$ and $\mu^-$ denote the respective probability measures. Then note that for any measurable set $A\subseteq \uone$, the invariance of Haar measure implies that 
\begin{align}\label{muplus}
\mu^{\pm}(A) &= \frac{1}{Z}\int_A \rho(z e^{\pm\i \ep}) d\sigma_0(z)\notag \\
&= \frac{1}{Z}\int_{Ae^{\pm\i \ep}} \rho(z') d\sigma_0(z') = \mu(A e^{\pm \i \ep}).
\end{align}
Next, note that for any $z\in \uone$,
\begin{align*}
\sqrt{\rho^+(z)\rho^-(z)} &= \exp\biggl\{\frac{1}{2}\Re(\overline{w}z e^{\i \ep} + \overline{w} z e^{-\i \ep})\biggr\}\\
&= \rho(z) \exp\biggl\{\frac{1}{2}\Re(\overline{w}z(e^{\i \ep} + e^{-\i \ep} - 2))\biggr\}\\
&\ge \rho(z) \exp\biggl\{-\frac{1}{2}|w| |e^{\i \ep} + e^{-\i \ep} - 2|\biggr\}.
\end{align*}
But, by the inequality \eqref{eynorm}, 
\begin{align*}
|e^{\i \ep} + e^{-\i \ep} - 2| &\le |e^{\i \ep} - 1 -\i \ep| + |e^{-\i \ep} -1 +\i \ep|\le \ep^2.
\end{align*}
Thus, we get
\begin{align*}
\sqrt{\rho^+(z)\rho^-(z)} &\ge \rho(z) e^{-\frac{1}{2}|w|\ep^2}\ge \rho(z)e^{-1/2},
\end{align*}
where the last inequality holds because $\ep \le |w|^{-1/2}$. 
By the Cauchy--Schwarz inequality, this gives  for any measurable set $A\subseteq \uone$,
\begin{align*}
\sqrt{\mu^+(A) \mu^-(A)} &= \biggl(\int_A f^+(z) d\sigma_0(z) \int_A f^-(z)d\sigma_0(z)\biggr)^{1/2}\\
&\ge \int_A \sqrt{f^+(z) f^{-}(z)} d\sigma_0(z)\\
&= \frac{1}{Z} \int_A \sqrt{\rho^+(z) \rho^{-}(z)} d\sigma_0(z) \\
&\ge \frac{1}{Z}\int_A \rho(z) e^{-1/2} d\sigma_0(z) = \mu(A) e^{-1/2}. 
\end{align*}
Take any $z\in \uone$ and let $A := \{z'\in \uone: |z-z'| <  \frac{1}{4}\ep\}$. The above inequality shows that at least one $\mu^+(A)$ and $\mu^-(A)$ is $\ge \mu(A) e^{-1/2}$. First, suppose that $\mu^+(A) \ge \mu(A) e^{-1/2}$. Then by equation \eqref{muplus}, 
\begin{align*}
\mu(A \cap A e^{\i \ep}) &\ge \mu(A) + \mu(A e^{\i \ep }) - 1\\
&= \mu(A) + \mu^+(A) - 1\ge \mu(A) (1+ e^{-1/2}) -1.
\end{align*}
We claim that $A \cap A e^{\i \ep}$ is empty. To see this, notice that if $z' \in A \cap A e^{\i \ep}$, then $|z' - z|< \frac{1}{4}\ep$ and $|z' e^{-\i \ep } - z |< \frac{1}{4} \ep$. Thus, $|z' - z' e^{-\i \ep } | < \frac{1}{2} \ep $. But this contradicts the inequality \eqref{mainep}. So, $A \cap A e^{\i \ep}$ must be empty. Thus, the above inequality implies that 
\[
\mu(A) \le \frac{1}{1+ e^{-1/2}}. 
\]
The same argument shows that this inequality also holds if $\mu^-(A) \ge \mu(A) e^{-1/2}$. This leads to the conclusion that for any $z\in \uone$,
\begin{align*}
\int |z - z'|^2 d\mu(z') &\ge \frac{1}{16} \ep^2 \mu\biggl(\biggl\{z'\in \uone: |z - z'| \ge \frac{1}{4}\ep\biggr\}\biggr)\\
&\ge \frac{1}{16} \ep^2 \biggl(1 - \frac{1}{1+ e^{-1/2}}\biggr)\\
&= \frac{e^{-1/2}}{16(1+e^{-1/2})}\min\{1,|w|^{-1}\}.
\end{align*}
Integrating over $z$ completes the proof.
\end{proof}

Lemma \ref{spreadlmm} yields the following corollary.
\begin{cor}\label{spreadcor}
Let $\xi$ be a $\uone$-valued random variable following the probability density $f$ defined in Lemma \ref{spreadlmm}. Then 
\[
|\E(\xi)|\le 1 - C\min\{1,|w|^{-1}\},
\]
where $C$ is a positive universal constant.
\end{cor}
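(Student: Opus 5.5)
The plan is to reduce Corollary~\ref{spreadcor} to Lemma~\ref{spreadlmm} through the elementary identity relating the variance of a unit-modulus random variable to its mean. Let $\xi_1,\xi_2$ be independent copies of $\xi$, each with law $\mu$. Since $|\xi_1|=|\xi_2|=1$, expanding the square gives
\[
|\xi_1-\xi_2|^2 = 2 - 2\Re(\xi_1\overline{\xi_2}).
\]
Taking expectations and using independence, $\E(\xi_1\overline{\xi_2}) = \E(\xi)\overline{\E(\xi)} = |\E(\xi)|^2$, which is real. Therefore
\[
\iint |z_1-z_2|^2\, d\mu(z_1)\,d\mu(z_2) = 2 - 2|\E(\xi)|^2 .
\]

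Next, apply Lemma~\ref{spreadlmm}, with constant $C_0$, to the left-hand side. This gives $2-2|\E(\xi)|^2\ge C_0\min\{1,|w|^{-1}\}$, so
\[
|\E(\xi)|^2\le 1-\tfrac{C_0}{2}\min\{1,|w|^{-1}\}.
\]
The bound is nonnegative because the left-hand side is, so the square root is legitimate. Using $\sqrt{1-t}\le 1-t/2$ for $t\in[0,1]$, we obtain $|\E(\xi)|\le 1-\tfrac{C_0}{4}\min\{1,|w|^{-1}\}$, which is the claim with $C=C_0/4$.

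There is no real obstacle here. The only points needing care are that independence makes the cross term equal to $|\E\xi|^2$, and that $t=\tfrac{C_0}{2}\min\{1,|w|^{-1}\}$ lies in $[0,1]$. The latter follows automatically from the displayed inequality, since it says $1-t\ge|\E(\xi)|^2\ge 0$.
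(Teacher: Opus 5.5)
Your proof is correct and follows essentially the same route as the paper: the independent-copy identity $\E|\xi_1-\xi_2|^2 = 2-2|\E(\xi)|^2$, then Lemma~\ref{spreadlmm}, then $\sqrt{1-a}\le 1-a/2$. Where the paper simply takes $C\le 1$ without loss of generality, you observe that $t\in[0,1]$ follows automatically from $1-t\ge|\E(\xi)|^2\ge 0$, which is a slightly cleaner justification.
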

\begin{proof}
Let $\xi_1, \xi_2$ be i.i.d.~copies of $\xi$. Then 
\begin{align*}
|\xi_1 - \xi_2 |^2 &= 2 - 2\Re(\xi_1 \overline{\xi}_2).
\end{align*}
This gives
\begin{align*}
\E(|\xi_1  -\xi_2|^2) &= 2 - 2\Re(\E(\xi_1) \E(\overline{\xi}_2)) = 2 - 2|\E(\xi)|^2.
\end{align*}
Rearranging and applying Lemma \ref{spreadlmm}, we get
\[
|\E(\xi)|^2 = 1 - \frac{1}{2}\E(|\xi_1 -\xi_2 |^2) \le 1 - C\min\{1,|w|^{-1}\},
\]
where $C$ is a positive universal constant.  Without loss, $C\le 1$. Taking square-root on both sides and  applying the inequality $\sqrt{1-a}\le 1-\frac{1}{2}a$ (which holds for $a\in [0,1]$) completes the proof.
\end{proof}

We will now use Corollary \ref{spreadcor} to prove the following generalization of the classical Mermin--Wagner theorem~\cite{mermin67, merminwagner66}. In the following, `positively oriented edge' means a directed nearest-neighbor edge $(x,y)$ in $\Z^2$ where $x$ is less than $y$ in the lexicographic order.
\begin{lmm}\label{merminlmm}
Let $\Lambda$ be a finite subset of $\Z^2$ and let $E$ be the set positively oriented nearest-neighbor edges with both endpoints in $\Lambda$. Suppose that we are given a collection of complex numbers $(w_e)_{e\in E}$. Consider the probability measure $\gamma$ on $\uone^\Lambda$ that has density $\propto \exp(\sum_{e =(x,y)\in E} \Re(w_e \xi_x \overline{\xi}_y))$ with respect to the product of normalized Haar measures on $\uone^\Lambda$, at $\xi\in \uone^\Lambda$. Let $\phi$ be a random configuration drawn from $\gamma$. Then for any $x\in \Lambda$ and $R>0$ such that $y := x + (R,0)\in \Lambda$, we have 
\[
|\E(\overline{\phi}_x \phi_y)| \le e^{-CL^{-1}\log (R+1)},
\]
where $C$ is a positive universal constant and $L :=1+ \max_{e\in E} |w_e|$.
\end{lmm}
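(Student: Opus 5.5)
The approach is a Mermin--Wagner-type argument. We decompose $\Lambda$ into $\ell^\infty$-shells around $x$, attach an independent uniform $\uone$ phase to each shell, and condition on everything except these shell phases. The conditional law of the shell phases is then a one-dimensional nearest-neighbour chain. Its increments are independent, and each has a law of the form in Lemma~\ref{spreadlmm} with parameter of size $O(kL)$. Corollary~\ref{spreadcor} then gives a factor $1-C/(kL)$ per shell, and the harmonic sum produces the $\log(R+1)$.

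\emph{Construction.} For $z\in\Lambda$ let $d(z):=\min\{\|z-x\|_\infty, R\}\in\{0,1,\dots,R\}$, so that $d(x)=0$ and $d(y)=R$. Since $\|\cdot\|_\infty$ changes by at most $1$ along a nearest-neighbour edge, every edge $e=(a,b)\in E$ has $|d(a)-d(b)|\le 1$. Let $u_0,\dots,u_R$ be i.i.d.\ Haar-distributed on $\uone$, independent of $\phi$, and set $\chi_z:=\phi_z\,\overline{u}_{d(z)}$. The map $(\phi,u)\mapsto(\chi,u)$ preserves the product Haar measure, by translation invariance of Haar measure in each $\phi_z$. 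Hence $(\chi,u)$ has joint density proportional to
\[
\exp\Bigl(\sum_{e=(a,b)\in E}\Re\bigl(w_e\,\chi_a\overline{\chi}_b\,u_{d(a)}\overline{u}_{d(b)}\bigr)\Bigr).
\]
Edges inside a single shell do not involve $u$. Grouping the remaining edges by the pair of consecutive shells they join, the conditional density of $u$ given $\chi$ is proportional to $\exp(\sum_{k=0}^{R-1}\Re(w'_k\,\overline{u}_k u_{k+1}))$. Here $w'_k$ is a $\chi$-measurable complex number: it is the sum of $w_e\chi_a\overline{\chi}_b$, or of its conjugate depending on orientation, over edges joining shell $k$ and shell $k+1$. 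The number of such edges is at most $C_0(k+1)$, so $|w'_k|\le C_0(k+1)L$.

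\emph{Key steps.} First note that $\overline{\phi}_x\phi_y=\overline{\chi}_x\chi_y\,\overline{u}_0u_R=\overline{\chi}_x\chi_y\prod_{k=0}^{R-1}v_k$, where $v_k:=\overline{u}_ku_{k+1}$. Next, conditionally on $\chi$, change variables from $(u_0,\dots,u_R)$ to $(u_0,v_0,\dots,v_{R-1})$, which again preserves Haar measure. Under this change $u_0$ is uniform and the $v_k$ are conditionally independent, with $v_k$ having density proportional to $\exp(\Re(w'_k v_k))$. This is exactly the form in Lemma~\ref{spreadlmm} with $w=\overline{w'_k}$. Corollary~\ref{spreadcor} then gives
\[
|\E(\overline{\phi}_x\phi_y\mid\chi)|\le\prod_{k=0}^{R-1}\Bigl(1-C\min\{1,(C_0(k+1)L)^{-1}\}\Bigr)\le\exp\Bigl(-C'L^{-1}\sum_{k=1}^{R}k^{-1}\Bigr),
\]
using $L\ge1$, $1-t\le e^{-t}$ and $|\chi_z|=1$. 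Since $\sum_{k\le R}1/k\ge\log(R+1)$, taking expectations over $\chi$ finishes the proof.

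\emph{Main obstacle.} The delicate point is the bookkeeping of the change of variables: one must check that the joint law of $(\chi,u)$ really has the displayed density, and that conditioning on $\chi$ leaves exactly a free-ended chain in $u$, so that the increments $v_k$ factorize. The shells are truncated at $R$, and $\Lambda$ is an arbitrary finite set, possibly with shells missing vertices. Both are harmless: only edges between consecutive shells carry $u$-dependence, and a shell with no crossing edges simply gives $w'_k=0$, where the corollary bound is strongest. The edge-count bound $|w'_k|=O(kL)$ is what forces the logarithmic rate.
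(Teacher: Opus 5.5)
Your argument is correct. It follows the same overall strategy as the paper: $\ell^\infty$-shells around $x$, conditional independence of the shell increments, Corollary~\ref{spreadcor} with parameter $O(kL)$, and a harmonic sum. The difference is in how you implement the shell rotations.

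The paper uses a deterministic change of variables $\tau$ on $\uone^\Lambda$ itself. It rotates shell $k$ by the spin $\phi_{y_k}$ at $y_k=x+(k,0)$, so $\psi_{y_k}=\phi_{y_k}\overline{\phi}_{y_{k-1}}$ plays the role of your $v_{k-1}$. Up to the gauge choice $u_k=\phi_{y_k}$ (and leaving shell $0$ unrotated), it is your construction. Because the rotation angles are read off from spins of the system, the paper has two extra tasks:
\begin{itemize}
\item it must first enlarge $\Lambda$ to a full box $x+\{-N,\dots,N\}^2$, with $w_e=0$ on the new edges, so that every $y_k$ exists;
\item it must then rewrite the Hamiltonian case by case (the sets $E_k$, $F_k^1$, $F_k^2$, the shell $k=1$, and edges touching $y_k$) to identify the conditional parameters.
\end{itemize}

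Your auxiliary independent shell phases are the same expanded-model device the paper uses with the central $\uone$ in the proof of Theorem~\ref{confine3d}. They make the conditional law of $u$ given $\chi$ visibly a free-ended chain. They also reduce the bookkeeping to counting the $O(k+1)$ edges between consecutive shells. Combined with truncating $d$ at $R$, they let you handle an arbitrary finite $\Lambda$ directly, with no embedding.

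The paper's route avoids enlarging the probability space. Both routes give the same rate $L^{-1}\log(R+1)$.
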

\begin{proof}
Fix $x$ and $R$. First, we note that there is no loss of generality in assuming that $\Lambda = x+\{-N, \ldots, N\}^2$ for some sufficiently large $N$. To see this, note that the original system can be `expanded' to a system of this form by defining $w_e = 0$ for the new edges. This just couples the old system with a collection of independent uniformly distributed $\phi_x$'s. Proving the required bound for the new system proves it for the old system.

For each $0\le k\le N$, let $y_k = x + (k,0)$ and let $L_k$ be the set of vertices in $\Lambda$ that are at $\ell^\infty$ distance equal to $k$ from $x$. Define a map $\tau: \uone^\Lambda\to \uone^\Lambda$ as follows. For $\xi \in \uone^\Lambda$, let $\tau(\xi)_x := \xi_x$, and for each $1\le k\le N$, let $\tau(\xi)_{y_k} := \xi_{y_k} \overline{\xi}_{y_{k-1}}$. Next, for each $1\le k\le N$ and $y\in L_k\setminus\{y_k\}$, let $\tau(\xi)_y := \xi_y \overline{\xi}_{y_k}$. 

Note that each $e\in E$ is either an edge connecting two neighboring vertices in some $L_k$, or connecting a vertex in some $L_k$ with a neighboring vertex in $L_{k-1}$. For each $1\le k\le N$, let $E_k$ denote the set of edges of the first kind, and let $F_k$ denote the set of edges of the second kind, except the edge $(y_{k-1},y_k)$. Further decompose $F_k$ as $F_k^1\cup F_k^2$, where $F_k^1$ is the set of all $(y,y')\in F_k$ with $y\in L_{k-1}$ and $y'\in L_k$, and $F_k^2$ is the set of all $(y,y')\in F_k$ with $y\in L_k$ and $y' \in L_{k-1}$.

Take any $\xi \in \uone^\Lambda$ and let $\chi := \tau(\xi)$. Then by the above observation,
\begin{align*}
f(\xi) := \sum_{e =(y,y')\in E} w_e \xi_y\overline{\xi}_{y'} &= \sum_{k=1}^{N} w_{(y_{k-1},y_{k})} \xi_{y_{k-1}} \overline{\xi}_{y_{k}} + \sum_{k=1}^N\sum_{e = (y,y')\in E_k} w_e \xi_y \overline{\xi}_{y'}\\
&\qquad + \sum_{k=1}^N\sum_{e = (y,y')\in F_k} w_e \xi_y \overline{\xi}_{y'}.
\end{align*}
Take any $2\le k\le N$. Note that  for $(y,y')\in F_k^1$, 
\begin{align*}
\xi_{y} \overline{\xi}_{y'}  &= \xi_{y}\overline{\xi}_{y_{k-1}} \overline{\xi}_{y'} \xi_{y_k}\overline{\xi}_{y_k}\xi_{y_{k-1}} = \chi_{y} \overline{\chi}_{y'}\overline{\chi}_{y_k}.
\end{align*}
Similarly, for any $(y,y')\in F_k^2$, 
\begin{align*}
\xi_{y} \overline{\xi}_{y'}  &= \xi_{y}\overline{\xi}_{y_{k}} \overline{\xi}_{y'} \xi_{y_{k-1}}\overline{\xi}_{y_{k-1}}\xi_{y_{k}} = \chi_{y} \overline{\chi}_{y'}\chi_{y_k}.
\end{align*}
When $k=1$ and $(y,y')\in F_k^1$,  we have $y = x = y_0$ and $y'\in L_1$, which gives 
\[
\xi_y \overline{\xi}_{y'} = \xi_{y_0}\overline{\xi}_{y_1}\overline{\xi}_{y'} \xi_{y_1} = \overline{\chi}_{y_1} \overline{\chi}_{y'}.
\]
Similarly, if $k=1$ and $(y,y')\in F_k^2$, then $y' = x=y_0$ and $y\in L_1$, which gives 
\[
\xi_y \overline{\xi}_{y'} = \xi_y \overline{\xi}_{y_1}\xi_{y_1}\overline{\xi}_{y_0} = \chi_y \chi_{y_1}.
\]
Next, take any $1\le k\le N$ and note that for $(y,y')\in E_k$ with $y,y'\ne y_k$,
\begin{align*}
\xi_{y} \overline{\xi}_{y'}  &= \xi_{y}\overline{\xi}_{y_{k}} \overline{\xi}_{y'} \xi_{y_k} = \chi_{y} \overline{\chi}_{y'}.
\end{align*}
On the other hand, if $(y,y')\in E_k$ and $y=y_k$, then 
\begin{align*}
\xi_{y} \overline{\xi}_{y'}  &= \overline{\xi}_{y'} \xi_{y_k} = \overline{\chi}_{y'},
\end{align*}
and if $(y,y')\in E_k$ and $y' = y_k$, then
\begin{align*}
\xi_{y} \overline{\xi}_{y'}  &=\xi_{y} \overline{\xi}_{y_k} = \chi_{y}.
\end{align*}
Combining the above observations, we get
\begin{align*}
f(\xi) &= \sum_{k=1}^{N} w_{(y_{k-1},y_{k})} \overline{\chi}_{y_{k}}+ \sum_{k=1}^N\sum_{\substack{e = (y,y')\in E_k, \\ y,y'\ne y_k}} w_e \chi_y \overline{\chi}_{y'}\\
&\qquad + \sum_{k=1}^N w_{(y_k, y_k+(0,1))}\overline{\chi}_{y_k+(0,1)} + \sum_{k=1}^N w_{(y_k - (0,1), y_k)} \chi_{y_k-(0,1)}\\
&\qquad + \sum_{k=2}^N\biggl(\sum_{e = (y,y')\in F_k^1} w_e \chi_y \overline{\chi}_{y'}\biggr)\overline{\chi}_{y_k} + \sum_{k=2}^N\biggl(\sum_{e = (y,y')\in F_k^2} w_e \chi_y \overline{\chi}_{y'}\biggr)\chi_{y_k} \\
&\qquad + \sum_{e=(y,y') \in F_1^1} w_e \overline{\chi}_{y'}\overline{\chi}_{y_1} + \sum_{e=(y,y') \in F_1^2} w_e \chi_{y}\chi_{y_1}=: g(\chi).
\end{align*}
Let $\phi$ be a random configuration with density proportional to $e^{\Re(f)}$, and let $\psi:= \tau(\phi)$. 
We claim that $\psi$ has density proportional to $e^{\Re(g)}$. To see this, let us first show that for any bounded measurable $F:\uone^\Lambda \to \C$,
\begin{align}\label{eref}
\int F(\tau(\xi))\prod_{y\in \Lambda} d\sigma_0(\xi_y) = \int F(\xi)\prod_{y\in \Lambda} d\sigma_0(\xi_y),
\end{align}
where $\sigma_0$ denotes the normalized Haar measure on $\uone$.  To prove this claim, we first integrate over $\xi_y$ for all $y\in \Lambda \setminus\{y_0,\ldots,y_N\}$. Take any $y\in L_k\setminus\{y_k\}$. In the integrand, $\xi_y$ appears only through $\tau(\xi)_y = \xi_y \overline{\xi}_{y_k}$. By the invariance of Haar measure, we can replace this simply by $\xi_y$. Next, we successively integrate over $\xi_{y_N}, \xi_{y_{N-1}},\ldots,\xi_{y_0}$, in this order. Note that $\xi_{y_N}$ appears in the integrand only through $\tau(\xi)_{y_N} = \xi_{y_N} \overline{\xi}_{y_{N-1}}$. Again, by the invariance of Haar measure, we can replace this by $\xi_{y_N}$. After this replacement, the integrand now depends on $\xi_{y_{N-1}}$ only through $\tau(\xi)_{y_{N-1}} = \xi_{y_{N-1}} \overline{\xi}_{y_{N-2}}$. Thus, we can replace this term by $\xi_{y_{N-1}}$, and so on. This proves the claim \eqref{eref}. 

Let us now prove that $\psi$ has density proportional to $e^{\Re(g)}$. Let $F$ be as above. Then by equation \eqref{eref} and the definition of $g$, 
\begin{align*}
\int F(\tau(\xi)) e^{\Re(f(\xi))} \prod_{y\in \Lambda} d\sigma_0(\xi_y)  &= \int F(\tau(\xi)) e^{\Re(g(\tau(\xi)))} \prod_{y\in \Lambda} d\sigma_0(\xi_y)\\
&= \int F(\xi) e^{\Re(g(\xi))} \prod_{y\in \Lambda} d\sigma_0(\xi_y).
\end{align*}
Since this holds also for $F\equiv 1$, we get
\begin{align*}
\E(F(\psi)) &=\frac{\int F(\tau(\xi)) e^{\Re(f(\xi))} \prod_{y\in \Lambda} d\sigma_0(\xi_y) }{\int e^{\Re(f(\xi))} \prod_{y\in \Lambda} d\sigma_0(\xi_y) }\\
&= \frac{\int F(\xi) e^{\Re(g(\xi))} \prod_{y\in \Lambda} d\sigma_0(\xi_y) }{\int e^{\Re(g(\xi))} \prod_{y\in \Lambda} d\sigma_0(\xi_y) }.
\end{align*}
This proves that $\psi$ has density proportional to $e^{\Re(g)}$. 

Let $\E'$ denote conditional expectation given $\{\psi(y):y\notin \{y_1,\ldots,y_N\}\}$. From the form of $g$, we see that under this conditioning, the random variables $\psi_{y_1},\ldots,\psi_{y_N}$ are independent, with $\psi_{y_k}$ having conditional density proportional to $\exp(\Re(\overline{u}_k\chi))$ at $\chi \in \uone$, where
\[
u_k := w_{(y_{k-1}, y_k)} + \sum_{e = (y,y')\in F_k^1} w_e \psi_y \overline{\psi}_{y'} + \sum_{e = (y,y')\in F_k^2} \overline{w}_e \overline{\psi}_y \psi_{y'}
\]
if $2\le k\le N$, and
\[
u_1 := w_{(y_{0}, y_1)} + \sum_{e = (y,y')\in F_1^1} w_e \overline{\psi}_{y'} + \sum_{e = (y,y')\in F_1^2} \overline{w}_e \overline{\psi}_y.
\]
Note that $|u_k|\le CL k$, where $C$ is a universal constant and $L = 1+\max_{e\in E} |w_e|$. Thus, by Corollary \ref{spreadcor},
\begin{align*}
|\E'(\psi_{y_k})| &\le 1 - C_0 \min\{1, (C_1L k)^{-1}\}
\end{align*}
for some  positive universal constants $C_0$ and $C_1$. Thus, for any $1\le R\le N$,
\begin{align*}
|\E'(\psi_{y_1}\cdots \psi_{y_R})| &= \prod_{k=1}^R |\E'(\psi_{y_k})|\le \prod_{k=1}^R (1 - C_0 \min\{1, (C_1L k)^{-1}\})\\
&\le \prod_{k=1}^R e^{ - C_0 \min\{1, (C_1L k)^{-1}\}} = \exp\biggl(-C_0\sum_{k=1}^R \min\{1,(C_1Lk)^{-1}\}\biggr). 
\end{align*}
Now, note that since $L\ge 1$, we have
\[
\min\{1, (C_1Lk)^{-1}\} \ge \min\{(Lk)^{-1}, (C_1Lk)^{-1}\} = C_2 (Lk)^{-1}, 
\]
where $C_2 := \min\{1, C_1^{-1}\}$. Thus, we get
\begin{align*}
|\E'(\psi_{y_1}\cdots \psi_{y_R})| &\le \exp\biggl(-\frac{C_0C_2}{L}\sum_{k=1}^R \frac{1}{k}\biggr)\le \exp\biggl(-\frac{C_0C_2}{L}\log (R+1)\biggr). 
\end{align*}
Thus, the unconditional expected value  $|\E(\psi_{y_1}\cdots \psi_{y_R})|$ is also bounded above by the same quantity. But $\psi_{y_1}\cdots \psi_{y_R} = \overline{\phi}_x \phi_{y_R}$. This completes the proof.
\end{proof}

We are now ready to prove Theorem \ref{confine3d}.
\begin{proof}[Proof of Theorem \ref{confine3d}]
First, we construct an `expanded' model that is equivalent to our lattice gauge theory on $\Lambda$ in a certain sense. In the expanded model, configurations consist of maps from $E(\Lambda)$ into $G\times \uone$. Let $(U,\xi)$ be such a configuration. For a negatively oriented edge $(x,y)$, we let $U(y,x):= U(x,y)^{-1}$ and $\xi(y,x):= \xi(x,y)^{-1}$.  For a plaquette $p$ with successive oriented edges $e_1,e_2,e_3,e_4$, we define $U_p$ as before and analogously let $\xi_p := \xi(e_1)\xi(e_2)\xi(e_3)\xi(e_4)$. Finally, we consider the probability measure 
\[
d\nu(U,\xi) = \frac{1}{\tilde{Z}}\exp\biggl(-\beta \sum_{p\in P(\Lambda)} \Re(\tr(I-\xi_pU_p))\biggr) \prod_{e\in E(\Lambda)} d\sigma(U(e)) d\sigma_0(\xi(e)), 
\]
where $\sigma$ and $\sigma_0$ are the normalized Haar measures on $G$ and $\uone$, respectively, and $\tilde{Z}$ is the normalizing constant.

We claim that $\tilde{Z} = Z$, where $Z$ is the normalizing constant of our original lattice gauge theory on $\Lambda$. To see this, recall that $zI\in G$ for all $z\in \C$ with $|z|=1$. Take any $\xi$. Since $|\xi(e)|=1$ for all $e$, and 
\[
\xi_p U_p = \xi(e_1) U(e_1) \xi(e_2)U(e_2)\xi(e_3)U(e_3) \xi(e_4)U(e_4)
\]
for a plaquette $p$ with successive oriented edges $e_1,e_2,e_3,e_4$, 
the invariance of $\sigma$ implies that 
\begin{align*}
\tilde{Z} &=\int \exp\biggl(-\beta \sum_{p\in P(\Lambda)} \Re(\tr(I-\xi_pU_p))\biggr) \prod_{e\in E(\Lambda)} d\sigma(U(e))\\
&= \int \exp\biggl(-\beta \sum_{p\in P(\Lambda)} \Re(\tr(I-U_p))\biggr) \prod_{e\in E(\Lambda)} d\sigma(U(e)) = Z. 
\end{align*}
Since this holds for any $\xi$ and $\sigma_0$ is a probability measure, we can now integrate over $\xi$ and conclude that $\tilde{Z} = Z$.

Next, for any $(U,\xi)$, let $\xi U\in G^{E(\Lambda)}$ denote the configuration $(\xi U)(e) := \xi(e) U(e)$. Then the same argument as above implies that for any bounded measurable function $f:G^{E(\Lambda)} \to \C$,
\begin{align*}
&\int f(\xi U) \exp\biggl(-\beta \sum_{p\in P(\Lambda)} \Re(\tr(I-\xi_pU_p))\biggr) \prod_{e\in E(\Lambda)} d\sigma(U(e))\\
&= \int f(U) \exp\biggl(-\beta \sum_{p\in P(\Lambda)} \Re(\tr(I-U_p))\biggr) \prod_{e\in E(\Lambda)} d\sigma(U(e)). 
\end{align*}
Combining, we get that 
\begin{align}\label{expansion}
\int f(\xi U) d\nu(U,\xi) = \int f(U) d\mu(U),
\end{align}
where $\mu$ is the probability measure for the original lattice gauge theory.

Henceforth, we will let $(U,\xi)$ denote a random configuration drawn from $\nu$, and $V$ be a random configuration drawn from $\mu$. Let $\ell$ be the rectangular loop with vertices $(0,0,0)$, $(R, 0, 0)$, $(0, T,0)$ and $(R,T,0)$. Let $e_1,\ldots, e_k$ be the successive oriented edges of $\ell$ (with some arbitrary $e_1$), and define
\begin{align*}
&W_\ell := \tr(V(e_1)\cdots V(e_k)),\\
&Q_\ell := \tr(U(e_1)\cdots U(e_k)),\\
&\chi_\ell := \xi(e_1)\cdots \xi(e_k).
\end{align*}
Then by equation \eqref{expansion},
\begin{align*}
\E(W_\ell) &= \E[\tr(\xi(e_1)U(e_1)\cdots \xi(e_k)U(e_k))]= \E(\chi_\ell Q_\ell).
\end{align*}
Let $\E'$ denote conditional expectation given $U$. Then by the above equation,
\[
\E(W_\ell) = \E(\E'(\chi_\ell) Q_\ell),
\]
which implies that 
\[
|\E(W_\ell)| \le \E(|\E'(\chi_\ell) Q_\ell|)\le n \E(|\E'(\chi_\ell)|). 
\]
Thus, it suffices to show that 
\begin{align}\label{chitoshow}
|\E'(\chi_\ell)|\le e^{-C(1+n\beta)^{-1}T \log (R+1)}
\end{align}
for some positive universal constants $C$.

Let $E'\subseteq E(\Lambda)$ denote the set of edges of the form $(x,x+e_2)$, where $e_2=(0,1,0)$. Let $E'' := E(\Lambda) \setminus E'$, and let $\E''$ denote conditional expectation given $U$ and $(\xi(e))_{e\in E''}$. We will prove that 
\begin{align}\label{chitoshow2}
|\E''(\chi_\ell)|\le e^{-C(1+n\beta)^{-1}T \log (R+1)}
\end{align}
for some positive universal constant $C$. Clearly, this implies the bound \eqref{chitoshow} and completes the proof. 

For $0\le j< T$, let $p_j$ be the edge $(j e_2, (j+1)e_2)$ and let $q_j$ be the edge $(Re_1 + je_2, Re_1+(j+1)e_2)$. Also, let $E_j\subseteq E(\Lambda)$ be the set of edges of the form $(x, x+e_2)$ where the second coordinate of $x$ is $j$.  Repeating the observation that we have already made several times earlier, note that conditional on $U$ and $(\xi(e))_{e\in E''}$, the collections $(\xi(e))_{e\in E_j}$, $j=0,1,\ldots,T-1$ are independent. Consequently,
\begin{align*}
\E''(\chi_\ell) &= \chi'_\ell\prod_{j=0}^{T-1} \E''(\overline{\xi(p_j)} \xi(q_j)),
\end{align*}
where $\chi'_\ell$ is the product of $\xi(e)$ over all $e$ in the lower boundary of $\ell$ and $\overline{\xi(e)}$ over all $e$ in the upper boundary of $\ell$. In particular,
\begin{align}\label{chitoshow3}
|\E''(\chi_\ell)| &= \prod_{j=0}^{T-1} |\E''(\overline{\xi(p_j)} \xi(q_j))|.
\end{align}
Now, it is easy to see that conditional on $U$ and $(\xi(e))_{e\in E''}$, the collection $(\xi(e))_{e\in E_j}$ is a 2D spin system, with $\uone$-valued spins, and conditional density of the form considered in Lemma \ref{merminlmm}.  The corresponding $w$'s are uniformly bounded by $C_0n\beta$, where $C_0$ is a positive universal constant. Thus, for any $j$,
\[
|\E''(\overline{\xi(p_j)} \xi(q_j))|\le e^{-C_1(1+n \beta)^{-1} \log (R+1)},
\]
where, again, $C_1$ is a positive universal constant. By equation \eqref{chitoshow3}, this proves the claim~\eqref{chitoshow2} and therefore, completes the proof of the theorem. 
\end{proof}
\vskip.3in

\noindent {\bf Acknowledgments.}  I thank Christophe Garban for helpful conversations.
\vskip.1in
\noindent {\bf Funding information.} This work was supported by NSF grant DMS-2450608 and the Simons Collaboration grant on `Probabilistic Paths to QFT'.
\vskip.1in
\noindent {\bf Author contribution.} Sourav Chatterjee is the sole author of this manuscript.
\vskip.1in
\noindent {\bf Conflict of Interest.} None.

\bibliographystyle{abbrvnat}

\bibliography{myrefs}

\end{document}